\newcommand{\etalchar}[1]{$^{#1}$}
\DeclareMathOperator{\polylog}{polylog}
\DeclareMathOperator{\poly}{poly}
\newtheorem{theorem}{Theorem}
\newtheorem{lemma}{Lemma}
\newcommand{\eps}{\varepsilon}
\newcommand{\cV}{{\cal V}}
\newcommand{\bE}{\mathbb{E}}
\newcommand{\ket}[1]{|#1\rangle}
\newcommand{\QMA}{\mathsf{QMA}}
\newcommand{\NP}{\mathsf{NP}}
\newcommand{\LOCCQMA}{\mathsf{LOCCQMA}}
\newcommand{\BellQMA}{\mathsf{BellQMA}}
\begin{document}

\title{Short Multi-Prover Quantum Proofs for SAT \\
without Entangled Measurements}

\author{Jing Chen
\thanks{MIT.  Email: jingchen@csail.mit.edu.}
\and
Andrew Drucker
\thanks{MIT.  Email: adrucker@mit.edu.  Supported by a DARPA YFA grant.  Supported during part of this work by an Akamai Presidential Graduate Fellowship.}
}
\date{}
\maketitle

\begin{abstract}
BellQMA protocols are a subclass of multi-prover quantum Merlin-Arthur protocols in which the verifier is restricted to perform nonadaptive, unentangled measurements on the quantum states received from each Merlin.  In this paper, we prove that $m$-clause 3-SAT instances have BellQMA proofs of satisfiability with constant soundness gap, in which $\tilde{O}(\sqrt{m})$ Merlins each send $O(\log m)$ qubits to Arthur.  Our result answers a question of Aaronson et al., who gave a protocol with similar parameters that used entangled measurements; the analysis of our protocol is significantly simpler than that of Aaronson et al.  Our result also complements recent work of Brandao, Christandl, and Yard, who showed upper  bounds on the power of multi-prover quantum proofs with unentangled but adaptive (LOCC) measurements.
\end{abstract}

\section{Introduction}

In quantum Merlin-Arthur (QMA) proof systems, a computationally unbounded but untrusted prover Merlin tries to convince a polynomial-time quantum verifier Arthur that a given statement is true, by sending to Arthur a quantum state as a ``proof''.  We desire that the protocol have two properties.  The first is  ``completeness'': if the statement is true, then there should exist a proof which makes Arthur accept with at least some high probability $c$.  The second is ``soundness": if the statement is false, then for any proof received, Arthur should accept with at most some lower probability $s < c$. In general, the complexity class $\QMA_{\ell, c, s}$ consists of all languages whose membership can be proved by a quantum Merlin-Arthur proof system using $\ell$-qubit proofs, with completeness $c$ and soundness $s$. The complexity class $\QMA$ is defined to be $\QMA_{\mbox{\scriptsize{poly}}(n), 2/3, 1/3}$ where $n$ is the input length.

The generalized multi-prover version of $\QMA_{\ell, c, s}$, denoted $\QMA(k)_{\ell, c, s}$, was introduced by Kobayashi, Matsumoto, and Yamakami in \cite{KMY'03}. In such a proof system, $k$ Merlins are trying to convince a single Arthur that a given statement is true, by each sending Arthur a quantum state with $\ell$ qubits, and these $k$ states are assumed to be unentangled with each other.  The class $\QMA(k)$ is defined to be $\QMA(k)_{\mbox{\scriptsize{poly}}(n), 2/3, 1/3}$.

One piece of evidence for the power of multiple quantum provers was given by Blier and Tapp~\cite{BT'09}, who showed that every language in $\NP$ has a 2-prover proof system with extremely short proofs, of $\ell = O(\log n)$ qubits each.  Unfortunately, the soundness gap in their proof system (i.e., the quantity $c - s$) is very small: their protocol has $c = 1, s = 1 - 1/\poly(n)$.  A related but incomparable result was shown by Aaronson et al.~\cite{ABDFS'09}: they showed that $m$-clause 3-SAT instances can be proved satisfiable by a proof system using $\tilde{O}(\sqrt{m}))$ Merlins, each sending $O(\log m)$ bits, and with an improved soundness gap $c = 1, s = 1 - \Omega(1)$.  This still gives an almost-quadratic improvement in total proof length compared to known classical proofs, at least in the regime where the number $n$ of variables satisfies $n = \Theta(m)$.  

A recent paper by Harrow and Montanaro~\cite{HM'10} answers several important questions about $\QMA(k)$. They prove that soundness amplification for $\QMA(k)$ is possible and that $\QMA(k) = \QMA(2)$, for any $k = O(\poly(n))$.  Building on the result of~\cite{ABDFS'09}, Harrow and Montanaro also show that there exists a 2-prover proof system with proof length $\tilde{O}(\sqrt{m})$ for $m$-clause 3-SAT instances.

In all results mentioned above, Arthur uses the so-called \emph{swap test} measurement~\cite{BCWdW'01} as an important step in the protocol.  This is an efficient method to test whether two unentangled states are approximately equal.  
A natural question thus arises: how crucial is the swap test to the power of multi-prover quantum proof systems?  The swap test is an example of an \emph{entangled measurement}, in which the states may become entangled by the measurement process; so more generally, how crucial are entangled measurements to these proof systems?  To make such questions formal, \cite{ABDFS'09} defined the complexity classes $\LOCCQMA(k)$ and $\BellQMA(k)$.
The class $\LOCCQMA(k)$ consists of all languages whose membership can be proved by a $k$-prover proof system where Arthur is constrained to make unentangled measurements on the states provided by the Merlins, but is allowed to make these measurements adaptively based on the outcome of previous measurements.  $\BellQMA(k)$ is the subclass of $\LOCCQMA(k)$ in which we additionally require that no choice of measurement depends on the outcomes of other measurements.  (For more precise definitions of LOCCQMA and BellQMA protocols, see Sec.~\ref{belldefsec}.)  Brandao~\cite{B'08} showed that $\BellQMA(k) = \QMA$ for constant $k$.  Quite recently Brandao, Christandl, and Yard~\cite{BCY'10} made a breakthrough in the study of entanglement, and used their techniques to show that $\LOCCQMA(k) = \QMA$ for constant $k$. The situation for growing values of $k$ remains unclear.\footnote{On the one hand, the ideas of~\cite{HM'10} rely on the swap test and do not apply to $\LOCCQMA$ and $\BellQMA$.
On the other hand, both the proof for $\BellQMA(k)=\QMA$ and that for $\LOCCQMA(k) = \QMA$ for constant $k$ blow up the total length of the proofs to $n^{\exp(\Omega(k))}$, and thus cannot be used for $k = \omega(1)$.
We can at least say that, if $\QMA(2)=\QMA$, then all classes here collapse to $\QMA$.}

\paragraph{Our contribution.} In this paper, we exhibit a BellQMA proof system for 3-SAT, which essentially matches the parameters of the earlier protocol of~\cite{ABDFS'09}. Formally, we prove the following theorem:

\begin{theorem}\label{mainthm} There is a BellQMA proof system which, given a 3-SAT instance with $m$ clauses, uses $\tilde{O}(\sqrt{m})$ Merlins, each of which sends $O(\log m)$ qubits.  The proof system has completeness $1 - \exp\{-\Omega(\sqrt{m})\}$ and soundness $1 - \Omega(1)$.
\end{theorem}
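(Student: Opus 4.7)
The plan is to mirror the architecture of the Aaronson et al.~\cite{ABDFS'09} protocol, but to replace its entangled swap-test with a consistency check performed entirely by measuring each Merlin independently in the computational basis, so that a birthday-collision argument replaces the spectral analysis of the swap test. By standard reductions I assume the input is a bounded-occurrence Gap-3SAT instance, so $n=\Theta(m)$ and each variable appears in $O(1)$ clauses. In the honest proof, each of the $K=\Theta(\sqrt{m}\,\polylog m)$ Merlins sends the same $O(\log m)$-qubit state
\[
|\psi_x\rangle \;=\; \frac{1}{\sqrt m}\sum_{C}|C\rangle\otimes\bigl|x_{i_1(C)},x_{i_2(C)},x_{i_3(C)}\bigr\rangle,
\]
where $x$ is a satisfying assignment and $i_1(C),i_2(C),i_3(C)$ are the three variables of $C$.

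On each Merlin Arthur picks, independently and non-adaptively, one of two single-register measurements: a \emph{satisfiability} measurement reading both registers in the computational basis to yield a sample $(C_k,y_k)\in[m]\times\{0,1\}^3$, or a \emph{uniformity} measurement that checks the clause register against the uniform superposition (e.g.\ via a Fourier-basis readout). The measurement choices are fixed in advance, so the protocol is plainly BellQMA. Arthur rejects if (i) any satisfiability sample has $C_k(y_k)=0$, (ii) the fraction of failed uniformity tests across Merlins exceeds the honest-case rate by a constant margin, or (iii) any two satisfiability samples disagree on the value of a shared variable. In the honest case (i) and (iii) pass deterministically, while (ii) has a constant per-Merlin failure rate and a Chernoff bound over the $\Theta(\sqrt m)$ uniformity tests gives total completeness $1-\exp\{-\Omega(\sqrt m)\}$.

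For soundness, fix a NO instance---no assignment satisfies more than $(1-\delta)m$ clauses---and suppose for contradiction that some product state $\rho_1\otimes\cdots\otimes\rho_K$ is accepted with probability $>1-c$ for a small constant $c$. Conditioning on (i) and (ii) lets me replace each $\rho_k$ by a state whose clause marginal is approximately uniform and which, conditioned on each clause $C$, yields an approximately satisfying 3-bit assignment to $C$. I then associate to each such $\rho_k$ an \emph{effective global assignment} $x^{(k)}\in\{0,1\}^n$ via a majority rule on its measurement statistics. A birthday calculation gives $\Theta(K^2/m)=\Omega(\polylog m)$ expected sample pairs $(k,k')$ whose sampled clauses share a variable, and these pairs serve as independent tests of pairwise consistency between $x^{(k)}$ and $x^{(k')}$.

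The main obstacle I expect is the following rigidity claim, which drives the whole argument: if the pairwise inconsistency rate among the $x^{(k)}$'s is $o(1)$, then there is a single global $x^*$ with which every $x^{(k)}$ agrees at essentially every sampled variable, and then the near-uniform clause marginal forces $x^*$ to satisfy a $1-o(1)$ fraction of clauses, contradicting the NO assumption. Formalizing this requires (a) extracting an honest global $x^*$ from $K$ approximately-agreeing effective assignments, and (b) converting a constant-rate global inconsistency into $\Omega(\polylog m)$ expected collision-detected disagreements, which in turn yield constant overall rejection probability. Because every cheating signal comes from classical birthday collisions on single-Merlin readouts, the analysis avoids the swap-test spectral arguments of~\cite{ABDFS'09} and should reduce to a short chain of Chernoff bounds combined with the bounded-occurrence combinatorics of 3-SAT.
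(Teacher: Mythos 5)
Your high-level architecture (a gap instance, $\tilde{O}(\sqrt m)$ unentangled proofs of the form ``index register plus value register,'' computational-basis birthday collisions for consistency, plus a uniformity check) is the same as the paper's, but there are two genuine gaps. The more serious one is your uniformity test. In your honest state $\ket{\psi_x}=\frac{1}{\sqrt m}\sum_C\ket{C}\ket{x|_C}$ the clause register is \emph{entangled} with the 3-bit value register, so its reduced state is mixed; a one-shot ``Fourier-basis readout of the clause register against the uniform superposition'' passes only with some constant probability that depends on the unknown satisfying assignment $x$ (it equals $\frac{1}{m^2}\sum_y N_y^2$ where $N_y$ counts clauses receiving pattern $y$), so Arthur cannot even set the threshold ``honest rate plus a margin.'' Worse, the test is not sound: in a NO instance with gap $\delta$, the Merlins can all send $\frac{1}{\sqrt{|S|}}\sum_{C\in S}\ket{C}\ket{x^*|_C}$ where $x^*$ is any assignment and $S$ is the set of $(1-\delta)m$ clauses it satisfies. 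Every satisfiability sample is a satisfied clause, all samples are mutually consistent (they come from one global $x^*$), and the overlap of the clause register with the uniform superposition is again a comparable constant (at least $(1-\delta)/8$), so no constant-margin threshold separates this cheat from the honest case. Your protocol as described therefore accepts unsatisfiable instances with probability $1-o(1)$. The paper avoids exactly this trap with a two-step test: first Fourier-transform and measure the constant-size value register, use the fact that the honest outcome is $0$ with probability exactly $1/K$ (assignment-independent, so a sharp $99\mu/100$ threshold over Merlins is available), and then, conditioned on outcome $0$, the index register is \emph{exactly} the uniform superposition honestly (deterministic second check), while in the soundness direction the conditional amplitudes are within a factor $4K$ of the original ones, so any state missing an $\eps$-fraction of indices in the computational basis is rejected with constant probability (Lemma~\ref{lem:fourier2}). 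Some such conditioning (or another genuine certificate of near-full computational-basis support) is indispensable, because the ``restrict support to the satisfiable clauses'' cheat is only caught there.

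The second gap is that the heart of your soundness argument is acknowledged but not carried out: the ``rigidity claim'' (extract an effective global assignment per Merlin by majority, show approximate global agreement, derive a contradiction with the gap) is precisely the nontrivial step, and the majority-extraction route has its own pitfall, namely that a dishonest Merlin need not commit to any assignment --- the value it returns for a variable may depend on which clause was sampled. The paper's Lemma~\ref{lem:fourier3} sidesteps extraction entirely: it generates, for each proof, a random coloring $\tau_i$ from the conditional distributions $\Pr[c_i=c\mid v_i=v]$, shows that for \emph{every} fixed pair $(\tau_i,\tau_j)$ the probability that the pair $(i,j)$ exposes a violation or an inconsistency is at least $\eps/n$ (using $d$-regularity and $\eps<\eta/20$ so that deleting the small uncovered sets cannot hide the $\eta$-fraction of violated constraints), and then controls the variance of the pair-count with a second-moment/Chebyshev argument. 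Finally, your reduction step should be stated as Dinur's size-efficient PCP (quasi-linear blowup); a generic gap reduction with polynomial blowup would destroy the $\tilde O(\sqrt m)$ total proof length.
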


Our result shows that entangled measurement is not necessary for short proofs of membership in 3-SAT, and thus answers a question raised in~\cite[Sec. 6.3]{ABDFS'09}.  Our system (just barely) loses perfect completeness as achieved by the protocol of~\cite{ABDFS'09}, but retains the constant soundness gap of that protocol.  The analysis of our protocol is also significantly simpler than that of~\cite{ABDFS'09}, which may be viewed as another contribution of this work.

Our protocol also complements a negative result from~\cite{BCY'10} (although our work was independent of theirs).
Corollary 5 of~\cite{BCY'10} implies that if there exists a 2-prover LOCCQMA protocol for 3-SAT with proof length $o(\sqrt{m})$ and with constant soundness gap,
then there exists a deterministic algorithm solving 3-SAT in subexponential time.\footnote{This is not quite made explicit in~\cite{BCY'10}, so we elaborate. \cite[Corollary 5]{BCY'10} gives an explicit construction of certain mappings called ``approximate disentanglers" (for LOCC measurements).  Following the use of such mappings as described in~\cite{ABDFS'09} (where they were defined), a 2-prover LOCCQMA proof system for 3-SAT of proof length $o(\sqrt{m})$ would imply a single-prover QMA protocol for 3-SAT of proof length $\ell = o(m)$.  The maximum acceptance probability of such a protocol can be approximately determined using semidefinite-programming solvers in time $\poly(2^\ell) = 2^{o(m)}$.}     This result seems to pose a significant barrier to achieving shorter proof length using unentangled measurements.  Our positive result nearly reaches this barrier, except for the fact that we use more than 2 provers.
  If our protocol could be converted to a 2-prover BellQMA or LOCCQMA protocol with similar proof length, then (under the plausible assumption that 3-SAT requires exponential time) we would obtain a nearly tight understanding of the power of these restricted quantum proof systems for 3-SAT (and of many other $\NP$ languages, via standard reductions).

\paragraph{Our Techniques.}The construction of our proof system, which we sketch next, adapts techniques used by Blier and Tapp in their proof system for 3-colorability from~\cite{BT'09}, and combines them with sampling and PCP ideas similar to those used by Aaronson et al.~\cite{ABDFS'09}.  In the Blier-Tapp protocol, Arthur receives two states $\ket{\Psi^1}, \ket{\Psi^2}$ of form
\[ \ket{\Psi^i} =  \sum_{v, c} \alpha^i_{v, c} \ket{v}\ket{c}      , \quad{}  i = 1, 2,\]
where $v \in \{0, 1, \ldots, n - 1\}$ indexes a vertex in a graph $G$ to be properly 3-colored and $c \in \{0, 1, 2\}$ is a color for $v$.  The protocol randomly performs one of three tests on $\ket{\Psi^1}, \ket{\Psi^2}$:
\begin{itemize}
\item An ``Equality Test" uses the swap test to check that $\ket{\Psi^1}, \ket{\Psi^2}$ are nearly equal.  This is the only entangled measurement.
\item A ``Uniformity Test" uses the quantum Fourier transform to check that each state has amplitudes which are almost uniformly spread over the $n$ vertices.
\item A ``Consistency Test" directly measures the vertex and color registers on each proof, rejecting if it sees two adjacent vertices with the same color or two differently-colored copies of the same vertex.
\end{itemize}
This proof is extremely succinct---only two states of $\log n$ qubits each---but as mentioned, its soundness gap is only inverse-polynomial in $n$.  Intuitively\footnote{The actual soundness gap shown in~\cite{BT'09} is $\Omega(n^{-6})$, even worse than our sketch would suggest; this was improved to $\Omega(n^{-3 - \eps})$ by Beigi in~\cite{Bei'10} by a modified protocol that still uses two $O(\log n)$-sized proofs, but loses perfect completeness.} this is because, if the Merlins send proofs uniformly spread over the vertices, each equipped with a coloring violating only one edge constraint, then the Consistency test can only succeed if the two vertices sampled come from this edge, which happens with probability $2/n^2$.

We modify this protocol as follows.  First, we ask for $O(\sqrt{n})$ proofs instead of 2. 
The ``birthday paradox'' then ensures that the Consistency Test will turn up pairs of equal vertices.  These will cause rejection unless almost all vertices are nearly-unanimous in their colorings across the supplied proofs.  With this added assurance, we simply omit the Equality Test.  Our modified Uniformity Test ensures that there are enough states in which the amplitudes are almost uniformly spread over the vertices, although to improve the soundness of this test, we are forced to sacrifice perfect completeness.

In the Consistency Test, we now also expect to sample pairs of vertices adjacent in $G$.  However, this will only lead to rejection with noticeable probability if the constraint problem is ``highly unsatisfiable", in the sense that every coloring violates an $\Omega(1)$ fraction of the edge constraints.  To ensure this, we apply the size-efficient PCP reduction of Dinur~\cite{D'07} to our original 3-coloring problem (or 3-SAT instance), which incurs only a polylogarithmic blowup in the instance size.  This completes the sketch of our protocol and the basic ideas of the analysis; the full proof of correctness is slightly more involved.

\paragraph{Open Problems.} Some questions raise from our result and those mentioned above. The most immediate one is whether the number of provers in our system can be further reduced without expanding the total proof length much. In particular, is there a 2-prover LOCCQMA proof for 3-SAT with length $\tilde{O}(\sqrt{m})$?  As we discussed earlier, if 3-SAT requires exponential time, this would be a nearly tight result in terms of proof length.   Whether general 2-prover protocols can achieve even shorter proofs of satisfiability remains an interesting question.

It also seems promising to see whether the entanglement theory ideas of~\cite{BCY'10} can be extended to give a fuller understanding of entanglement between more than 2 quantum states.  As just one benefit, this could yield new information about the power of $\LOCCQMA(k)$ and $\BellQMA(k)$ for superconstant $k$.

\section{Preliminaries}

We assume familiarity with (uniform) polynomial-time quantum algorithms.  Such algorithms are describable by a polynomial-size quantum circuit with polynomially many auxiliary qubits; the circuit is required to be constructible by a classical logarithmic-space algorithm.

\subsection{BellQMA and LOCCQMA protocols}\label{belldefsec}

We now more formally define the restricted multi-prover proof systems called BellQMA and LOCCQMA protocols.  The complexity classes $\BellQMA(k), \LOCCQMA(k)$ are defined in perfect analogy with $\QMA(k)$, using these restricted protocols.\footnote{In~\cite{BCY'10}, the notations $\QMA_{\mathsf{LO}}(k) = \BellQMA(k)$ and $\QMA_{\mathsf{LOCC}}(k) = \LOCCQMA(k)$ are used.}
In BellQMA protocols, Arthur performs a so-called ``Bell test" upon the quantum proofs; in LOCCQMA protocols, Arthur performs a test involving only ``local operations and classical communication" (LOCC) upon the proofs.  This motivates the terminology.

Our definition of BellQMA protocols is slightly broader than that given in~\cite{B'08}, and we discuss the difference below.
The more general class of LOCCQMA protocols will not be important in this paper, but we choose to provide a definition since previous discussions presume familiarity with the framework of LOCC tests (see~\cite{Ben'96, Nie'99}).

In a $k$-prover QMA protocol, the verifier (Arthur) receives a classical input $x \in \{0, 1\}^n$, as well as $k$ ``proof" states $\ket{\Psi_1}, \ldots, \ket{\Psi_k}$ from $k$ provers (Merlins).  The $k$ proofs are required to be unentangled.  Arthur performs some quantum-polynomial time test on the proofs, after which he either accepts or rejects.
We say that a QMA protocol for a language $L \subseteq \{0, 1\}^*$ possesses \emph{completeness} $c$ and \emph{soundness} $s < c$ if:
\begin{enumerate}
\item If $x \in L$, some collection $\ket{\Psi_1}, \ldots, \ket{\Psi_k}$ causes Arthur to accept with probability at least $c$;
\item If $x \notin L$, any collection $\ket{\Psi_1}, \ldots, \ket{\Psi_k}$ causes Arthur to accept with probability at most $s$.
\end{enumerate}

In a $k$-prover BellQMA protocol, we restrict the form of Arthur's test as follows: Arthur first performs a polynomial-time quantum computation on $x$ alone.  The workspace is then measured fully, yielding a description of measurements $M_1, \ldots, M_k$ described by polynomial-size quantum circuits; the $i$-th measurement, which may output more than one bit, is required to act locally on the $i$-th proof.  The measurements are then performed, and we let $y_i$ denote the output of the $i$-th measurement.  Finally, Arthur performs a quantum polynomial-time computation on $(x, y_1, \ldots, y_k)$ to decide whether to accept or reject.

Since the measurements $M_1, \ldots, M_k$ are fully determined by the intermediate measurement and act separately on the $k$ unentangled proof states, the proof states remain unentangled after the $M_i$ are performed.
A second observation about BellQMA protocols is that the identities of the measurements $M_1, \ldots, M_k$ can be random variables, and these random variables need not be independent.\footnote{In our protocol, the measurements will be chosen in a dependent fashion; however, it is not hard to modify our protocol to make these choices independent (for a fixed input $x$), with only a constant-factor increase in the number of provers.}

In~\cite{ABDFS'09}, the definition of BellQMA protocols was informal and slightly ambiguous.  In Brandao's thesis~\cite{B'08}, the definition of BellQMA protocols required the final computation on $(x, y_1, \ldots, y_k)$ to be performed by a \emph{classical} polynomial-time algorithm.  We feel that, since Arthur is allowed to use arbitrary polynomial-time quantum measurements $M_i$ on the $k$ proofs, it is natural to allow polynomial-time quantum computations in the final stage.  Indeed, Brandao's proof in~\cite{B'08} that $\BellQMA(k) = \BellQMA$ works equally well if this final computation is allowed to be quantum.  The BellQMA protocol that we give in this paper actually obeys Brandao's more restrictive definition.

In LOCCQMA protocols, Arthur is allowed to repeatedly and adaptively choose measurements to perform on the proofs.  However, these measurements are required to act locally on a single proof state, and they must be performed when Arthur's workspace is in a computational basis state.  This forces the proofs to remain unentangled throughout the computation.

Formally, $k$-prover LOCCQMA protocols can be defined as follows.  Arthur's verification algorithm consists of a polynomial number $p(n)$ of stages.  Each stage $t \leq p(n)$ has the following form:

\begin{enumerate}  
\item Arthur first performs a polynomial-time quantum computation acting on his workspace qubits alone.  Arthur's full workspace is then measured, yielding a tuple $(i_t, M_t, z_t)$.  Here $M_t$ describes a polynomial-time quantum measurement to be performed locally on the $i_t$-th proof, and $z_t$ is an auxiliary memory string.
\item $M_t$ is then performed, yielding an outcome $y_t$ of one or more bits.  Arthur then begins the $(t+1)$-st stage with his workspace initialized to the computational basis state $\ket{y_t, z_t}$.  
\end{enumerate}

Finally, Arthur accepts or rejects based upon the first bit of $y_{p(n)}$.  We remark that Arthur is allowed to measure individual proof states more than once.

Note that BellQMA protocols can be defined as LOCCQMA protocols in which all measurements to be performed on the $k$ proof states are determined in the first computation phase and described by the string $z_1$, then nonadaptively performed in the following phases.

\subsection{Dinur's PCP reduction}\label{dinursec}

The recent version of the PCP Theorem given by Dinur~\cite{D'07} is a reduction from the Boolean Satisfiability problem to a so-called \textit{constraint graph} problem, or 2-CSP.  A constraint graph is an undirected graph (possibly with self-loops) along with a set $\Sigma$ of ``colors". For each edge $e = (u, v) \in E$ the constraint graph has an associated constraint $R_e: \Sigma \times \Sigma \rightarrow \{0, 1\}$.  A coloring $\tau: V\rightarrow \Sigma$ \textit{satisfies} the constraint $R_e$ if $R_e(\tau(u), \tau(v)) = 1$.  We say that $G$ is satisfiable if there exists a mapping $\tau$ that satisfies all constraints.  We say that $G$ is $(1 - \eta)$-unsatisfiable if for all mappings $\tau: V\rightarrow \Sigma$, the fraction of constraints satisfied by $\tau$ is at most $(1 - \eta)$.

\begin{theorem} {\cite[Thm. 8.1 and its proof]{D'07}} There exists a reduction $T$ from 3-SAT instances to 2-CSP instances, with the following properties:

\begin{enumerate}

\item Completeness: If $\varphi$ is a satisfiable formula, $T(\varphi)$ is a satisfiable 2-CSP instance;

\item Soundness: There exists an absolute constant $\eta > 0$ such that if $\varphi$ is unatisfiable, $G = T(\varphi)$ is $(1 - \eta)$-unsatisfiable;

\item Size-Efficiency: If $\varphi$ has $m$ clauses, then $|V(G)| = n = O(m\cdot \polylog m)$ and also $|E(G)| =  O(m\cdot \polylog m)$;

\item Alphabet Size: $|\Sigma| = K = O(1)$;

\item Regularity: $G$ is $d$-regular (with self-loops), where $d = O(1)$.
\end{enumerate}
\end{theorem}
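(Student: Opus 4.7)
The plan is to prove this via Dinur's iterated gap-amplification framework. First I would build an initial 2-CSP $G_0$ from $\varphi$: create one ``variable vertex'' $x_v$ with alphabet $\{0,1\}$ per Boolean variable of $\varphi$, and one ``clause vertex'' $y_C$ with alphabet equal to the (at most seven) satisfying assignments of $C$ per clause. For each clause $C$ and variable $v \in C$, add an edge $(y_C, x_v)$ whose constraint checks that the $v$-coordinate of $y_C$'s label equals the value of $x_v$. Then $|V(G_0)|, |E(G_0)| = O(m)$, the alphabet is constant, completeness is perfect, and for unsatisfiable $\varphi$ any labeling induces a Boolean assignment that falsifies some clause $C^*$, forcing at least one of the three edges incident to $y_{C^*}$ to be violated; this yields an initial soundness gap of $\Omega(1/m)$. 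A standard expander-replacement gadget (blow up each vertex $v$ into a cloud of $\deg(v)$ copies wired as a constant-degree expander with equality constraints on internal edges) converts $G_0$ into a $d$-regular expander 2-CSP at the cost of a constant-factor loss in the gap and a constant-factor blowup in size.

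Next I would apply Dinur's gap-amplification round repeatedly. One round takes a $d$-regular expander 2-CSP $G$ with gap $\eta$ to another such instance $G'$ with gap $\min(2\eta, \eta^*)$ for an absolute constant $\eta^* > 0$, via three sub-stages: (i) \emph{powering}, replacing $G$ by $G^t$ for a large constant $t$, keeping the same vertex set, connecting pairs at graph-distance $\leq t$, and inflating the alphabet to $\Sigma^{d^{\lceil t/2 \rceil}}$ so that a vertex label encodes a purported assignment to its $t/2$-neighborhood, with the constraint on a new edge checking local consistency of the two neighborhood-labels on their overlap and that every original constraint visible in the combined neighborhood is satisfied; (ii) \emph{alphabet reduction}, composing each enlarged constraint with an assignment tester (a PCP-of-proximity built, e.g., from the Hadamard or long code) to return to a fixed constant alphabet; and (iii) a second expander-replacement to restore $d$-regularity. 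Each sub-stage incurs only a constant-factor blowup in size and a constant-factor loss in gap.

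The main technical obstacle is the gap analysis of the powering step: showing that $G^t$ has gap $\Omega(\sqrt{t}\cdot \eta(G))$ whenever $\eta(G) \leq \eta^*/\sqrt{t}$. I would follow Dinur's strategy: given any assignment $\sigma'$ to $G^t$, plurality-decode a global assignment $\sigma: V(G) \to \Sigma$ by letting $\sigma(v)$ be the most common value the labels of $v$'s powered neighbors assign to $v$; then argue via a second-moment / expander-mixing calculation that the expected number of powered constraints that $\sigma'$ falsifies is $\Omega(\sqrt{t}\cdot \eta(G))$, since each edge violated by $\sigma$ in $G$ sits on roughly $\sqrt{t}$ random $t$-step walks with controlled clumping thanks to the spectral gap. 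Iterating the round $O(\log m)$ times drives the gap from $\Omega(1/m)$ up to a universal constant, and careful accounting of per-round constants in the powering, alphabet-reduction, and expanderification sub-stages (exactly as in Dinur) keeps the cumulative size overhead down to $\polylog m$, yielding the $O(m\cdot \polylog m)$ bound of (3) while preserving completeness, constant soundness gap, constant alphabet, and $d$-regularity with self-loops, i.e.\ properties (1), (2), (4), and (5).
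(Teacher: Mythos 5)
The paper does not actually reprove this theorem: it cites \cite[Thm.~8.1]{D'07} directly, and the only ``proof'' content is the observation that regularity (item 5) follows by applying the preprocessing transformation of \cite[Lemma 1.9]{D'07} to the output graph, and that one first converts the 3-SAT instance into a constraint graph with $O(m)$ edges. You instead set out to reprove Dinur's theorem from scratch, and your sketch of a single amplification round (powering, alphabet reduction via assignment testers, expanderization, and the plurality-decoding second-moment analysis of powering) is a faithful outline of Dinur's proof of the \emph{polynomial-size} PCP theorem.

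There is, however, a genuine quantitative gap in your final accounting, and it lands exactly on the property this paper needs most: item (3), the $O(m\cdot\polylog m)$ size bound. Your initial instance has gap only $\Omega(1/m)$, so you need $\Theta(\log m)$ doubling rounds, and each round multiplies the instance size by some fixed constant $C>1$ (powering alone multiplies the edge count by roughly $d^{t-1}$, and alphabet reduction by another constant). The cumulative blowup is therefore $C^{\Theta(\log m)} = m^{\Theta(1)}$, i.e.\ polynomial; no ``careful accounting of per-round constants'' can reduce $C^{\log m}$ below a polynomial for any fixed $C>1$. Dinur's Theorem 8.1 (the quasilinear-size version) is obtained differently: one starts not from the trivial $\Omega(1/m)$-gap instance but from a quasilinear-size PCP with gap $\Omega(1/\polylog m)$ (a Ben-Sasson--Sudan-type construction), so that only $O(\log\log m)$ amplification rounds are needed, and the size overhead is $C^{O(\log\log m)} = \polylog m$. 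Without that ingredient your argument yields properties (1), (2), (4), (5) but only a polynomial bound in place of (3), which would destroy the $\tilde{O}(\sqrt{m})$ proof length claimed in Theorem~\ref{mainthm}.
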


The last point is not quite explicit in the main statement of Dinur's result, but can be readily extracted from her proof: simply apply the ``preprocessing" transformation of \cite[Lemma 1.9]{D'07}  to the graph output by her main reduction.  Also, Dinur's main reduction takes as input a constraint graph, not a formula, but we can simply begin by transforming any 3-SAT instance of $m$ clauses into an equivalent instance of an $\NP$-hard 2-CSP such as 3-Colorability, yielding a constraint graph whose number of edges is $O(m)$.

In our protocol, Arthur first performs the above reduction, yielding a 2-CSP $G$ on $n = \tilde{O}(m)$ vertices that is either satisfiable or $(1 - \eta)$-unsatisfiable.  We now describe our BellQMA protocol for the  problem, starting directly from the constraint graph $G$.

\section{The BellQMA protocol}
Given a constraint graph $G$, let the proof states Arthur receives be denoted $\ket{\Psi_1},\dots,\ket{\Psi_{C\sqrt{n}}}$, with $C$ a constant (to be determined later). Each $\ket{\Psi_i}$ consists of a ``vertex'' register with base states $\ket{0},\dots,\ket{n-1}$ (describable by $\lceil \log n \rceil $ qubits) and a ``color'' register with base states $\ket{0},\dots,\ket{K-1}$ (describable with $\lceil \log K \rceil = O(1)$ qubits). Let $\mu \triangleq C\sqrt{n}/K$. The verifier's protocol is given below.

\begin{center}
{\bf Verifier $\cV$:}
\end{center}

\begin{itemize}
\item Flip a fair coin. If Heads, do the Uniformity Test; if Tails, do the Consistency Test.

\item \textbf{Uniformity Test:}
\begin{itemize}
\item[1.] For each $\ket{\Psi_i}$, perform a Fourier transform $F_K$ on the color register and measure it. \\
Let $Z=\{i: \mbox{the color register of } \ket{\Psi_i}\ \mbox{is measured 0}\}$. If $|Z|< \frac{99\mu}{100}$, \emph{reject}; otherwise continue.

\item[2.] For each $\ket{\Psi_i}$ such that the measurement in Step 1 gets 0, perform a Fourier transform $F_n$ on the vertex register and measure it. If there exits a $\ket{\Psi_i}$ such that the measurement doesn't get 0, \emph{reject}; otherwise \emph{accept}.

\end{itemize}

\item \textbf{Consistency Test:}
\begin{itemize}
\item[1.] For each $\ket{\Psi_i}$, measure it and denote the value in the two registers as $(v_i, c_i)$.

\item[2.] If there exists two states $\ket{\Psi_i}$ and $\ket{\Psi_j}$ such that  $e=(v_i,v_j)\in E$ but $R_e(c_i,c_j)=0$, \emph{reject}.  Also \emph{reject} if $v_i = v_j$ but $c_i \neq c_j$.  Otherwise, \emph{accept}.

\end{itemize}

\end{itemize}

Note that, since $n = \tilde{O}(m)$, we have $ \tilde{O}(m)$ proofs, each consisting of $\log n + O(1) = O(\log m)$ qubits, as needed.  The verifier is clearly polynomial-time and performs only nonadaptive, unentangled measurements, so it defines a valid BellQMA protocol.

\subsection{Completeness of our protocol}
In the rest of the paper, we use $\hat{i}$ to denote the square root of $-1$, and reserve the symbol $i$ as an index of states sent by the provers.  We first consider the case where the 3-SAT instance $\varphi$ is satisfiable, so that the constraint graph $G$ is also satisfiable.

\begin{lemma}
If $G$ is satisfiable, then there exist (unentangled) states $\ket{\Psi_1},\dots,\ket{\Psi_{C\sqrt{n}}}$ such that $\cV$ accepts with probability at least $1-\exp\left(-\mu/(2\cdot 10^4)\right) = 1 - \exp\left(-\Omega(\sqrt{m})\right)$.
\end{lemma}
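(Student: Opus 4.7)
The natural construction, following Blier-Tapp, is to have each Merlin send the same ``history state'' with a satisfying assignment $\tau: V \to \Sigma$ encoded in the color register:
\[ \ket{\Psi_i} = \frac{1}{\sqrt{n}} \sum_{v=0}^{n-1} \ket{v} \ket{\tau(v)}, \qquad i = 1, \dots, C\sqrt{n}. \]
These are pure product states across Merlins, hence unentangled. The plan is to show that the Consistency Test accepts with probability $1$ and the Uniformity Test accepts with probability $1 - \exp(-\Omega(\mu))$, so the overall acceptance probability (averaging the two tests) is at least the claimed bound.

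For the Consistency Test, computational-basis measurement of $\ket{\Psi_i}$ yields a uniformly random vertex $v_i$ together with its color $c_i = \tau(v_i)$. Both rejection conditions are then vacuous: if $v_i = v_j$ then $c_i = \tau(v_i) = \tau(v_j) = c_j$, and if $(v_i, v_j) = e \in E$ then $R_e(c_i, c_j) = R_e(\tau(v_i), \tau(v_j)) = 1$ since $\tau$ satisfies every constraint. So the Consistency branch always accepts.

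For the Uniformity Test, I would analyze the two steps separately. Applying $I \otimes F_K$ to $\ket{\Psi_i}$ gives
\[ \frac{1}{\sqrt{nK}} \sum_v \sum_{c'} \omega_K^{c'\tau(v)} \ket{v}\ket{c'}, \qquad \omega_K = e^{2\pi\hat{i}/K}, \]
so the color register measures $0$ with probability exactly $1/K$, \emph{independently} across the $C\sqrt{n}$ Merlins (since the states are product). Thus $|Z|$ is a sum of $C\sqrt{n}$ i.i.d.\ Bernoulli$(1/K)$ variables with mean $\mu$, and a standard multiplicative Chernoff bound yields
\[ \Pr\!\left[|Z| < \tfrac{99}{100}\mu\right] \leq \exp\!\left(-\tfrac{\mu}{2\cdot 10^4}\right). \]
Conditioned on outcome $0$ in the color register of $\ket{\Psi_i}$, the residual vertex state is proportional to $\sum_v \ket{v}$, i.e.\ the uniform superposition; applying $F_n$ then yields $\ket{0}$ deterministically, so Step~2 never rejects. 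Hence the Uniformity branch accepts with probability at least $1 - \exp(-\mu/(2\cdot 10^4))$.

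Combining, the verifier accepts with probability at least
\[ \tfrac{1}{2}\cdot 1 + \tfrac{1}{2}\!\left(1 - \exp(-\mu/(2\cdot 10^4))\right) \geq 1 - \exp(-\mu/(2\cdot 10^4)), \]
which is the claimed bound. There is no real obstacle here; the only thing to be careful about is confirming that the $C\sqrt{n}$ trials in Step~1 of the Uniformity Test really are mutually independent (which follows from the product structure of the honest proofs) so that a Chernoff bound applies directly.
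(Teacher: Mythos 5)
Your proposal is correct and follows essentially the same route as the paper: identical uniform-superposition proofs encoding a satisfying coloring, Consistency Test accepting with probability $1$, a Chernoff bound on the i.i.d.\ Bernoulli$(1/K)$ outcomes in Step~1 of the Uniformity Test, and the observation that $F_n$ maps the post-measurement uniform vertex state to $\ket{0}$ so Step~2 never rejects. Your final averaging of the two branches even gives a marginally cleaner bound ($1 - \tfrac12\exp(-\mu/(2\cdot 10^4))$) than the expression written in the paper, and it implies the stated claim.
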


\begin{proof}
Let $\ket{\Psi_i} = \ket{\Psi}\triangleq \frac{1}{\sqrt{n}}\sum_{v=0}^{n-1}\ket{v}\ket{\tau(v)}$ for all $i\leq C\sqrt{n}$, where $\tau$ is a coloring satisfying the constraint graph $G$. Since $\tau$ is satisfying, the Consistency Test will accept with probability 1. Below we analyze the probability that the Uniformity Test will accept if that test is performed.

Observe that a Fourier transform on the color register changes $\ket{\Psi}$ into
\begin{equation}\label{eq1}
(I_n\otimes F_K)\frac{1}{\sqrt{n}}\sum_{v=0}^{n-1}\ket{v}\ket{\tau(v)} = \frac{1}{\sqrt{n}}\sum_{v=0}^{n-1}\ket{v}\frac{1}{\sqrt{K}}\sum_{k=0}^{K-1}\exp \left( \frac{2\pi \hat{i} \tau(v) k}{K} \right)\ket{k}.
\end{equation}
Therefore for each $\ket{\Psi_i}$, the measurement in Step 1 of the Uniformity Test will see 0 with probability $n(1/\sqrt{n})^2(1/\sqrt{K})^2 = 1/K$. Accordingly, $\bE[|Z|] = C\sqrt{n}/K = \mu$. Since the $\ket{\Psi_i}$'s are unentangled, their measurement outcomes are independent. By a Chernoff bound, the probability that the Uniformity Test passes Step 1 is
\begin{align*}
1 - \Pr\left[|Z|< \frac{99\mu}{100}  \right]  >  1-\exp\left(-\frac{\mu}{2\cdot 10^4}\right).
\end{align*}
Further notice that according to Eq.~\ref{eq1}, conditioned on the color register measuring to 0 in Step 1 of the Uniformity Test, the state in the vertex register of $\ket{\Psi}$ becomes $\frac{1}{\sqrt{n}}\sum_{v=0}^{n-1}\ket{v}$, and a Fourier transform $F_n$ will change this state into
\[F_n \frac{1}{\sqrt{n}}\sum_{v=0}^{n-1}\ket{v} = \frac{1}{\sqrt{n}}\sum_{v=0}^{n-1}\frac{1}{\sqrt{n}}\sum_{u=0}^{n-1}\exp \left(\frac{2\pi \hat{i} vu}{n} \right)\ket{u} = \ket{0}.\]
Thus for each $\ket{\Psi_i}$ which is measured 0 in Step 1 of the Uniformity Test, Step 2 of this test will measure 0 with probability 1. Accordingly, if the Uniformity Test passes Step 1, it will accept in Step 2 with probability 1.

Thus the probability that $\cV$ accepts is at least
$$\frac{1}{2}\cdot \left(1-2\exp\left(-\frac{\mu}{2\cdot 10^4}\right) \right)\cdot 1 + \frac{1}{2}\cdot 1 = 1-\exp\left(-\frac{\mu}{2\cdot 10^4}\right).$$
\end{proof}

\section{Soundness of our protocol}
Now we consider the case where the 3-SAT instance $\varphi$ is unsatisfiable, so that the constraint graph $G$ is $(1 - \eta)$-unsatisfiable.  We show that for any sequence of proof states $\ket{\Psi_1}, \dots,\ket{\Psi_{C\sqrt{n}}}$, $\cV$ will reject with probability $\Omega(1)$.  The proof depends on three lemmas, corresponding to three cases that cover all possible sequences of states sent by the Merlins.

First, we can assume without loss of generality that the states Arthur receives are pure states, since by convexity some sequence of pure states maximizes Arthur's acceptance probability.  For each $i\in [C\sqrt{n}]$, we can express $\ket{\Psi_i}$ as
\[\ket{\Psi_i}=\sum_{v=0}^{n-1} \alpha^i_v \ket{v} \sum_{j=0}^{K-1}\beta^i_{v,j}\ket{j},\]
where $\sum_{v=0}^{n-1}|\alpha^i_v|^2 = 1$ for each $i$, and $\sum_{j=0}^{K-1}|\beta^i_{v,j}|^2 = 1$ for each $i, v$.

Let $p^i_0$ be the probability that the color register of $\ket{\Psi_i}$ is measured 0 after the Fourier transform in Step 1 of the Uniformity Test (conditioned on our performing that test).  Let
\[Z' \triangleq \left\{i: p^i_0\geq \frac{1}{4K}\right\}.\]
We claim:

\begin{lemma}\label{lem:fourier1}
If $|Z'|\leq \frac{\mu}{2}$, then Step 1 of the Uniformity Test rejects with probability $\Omega(1)$.
\end{lemma}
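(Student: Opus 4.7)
The plan is to exploit the fact that, because the proofs $\ket{\Psi_1},\dots,\ket{\Psi_{C\sqrt{n}}}$ are unentangled, the indicator random variables $X_i := \mathbf{1}[i\in Z]$ are mutually independent Bernoulli variables with parameters $p_0^i$. Consequently $|Z|=\sum_i X_i$ is a sum of independent $\{0,1\}$ random variables, which puts us in the comfortable setting of classical concentration. Our goal is to upper bound $\bE[|Z|]$ strictly below the threshold $99\mu/100$, so that the random variable $|Z|$ falls below this threshold with constant probability.

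The first step is to estimate $\bE[|Z|]=\sum_{i=1}^{C\sqrt{n}} p_0^i$. We split this sum according to membership in $Z'$. For $i\in Z'$ we only know the trivial bound $p_0^i\le 1$, whereas for $i\notin Z'$ the definition of $Z'$ gives $p_0^i< 1/(4K)$. Under the hypothesis $|Z'|\le \mu/2 = C\sqrt{n}/(2K)$, this yields
\[
\bE[|Z|]\;<\;|Z'|\cdot 1 \;+\;\bigl(C\sqrt{n}-|Z'|\bigr)\cdot\frac{1}{4K}\;\le\;\frac{\mu}{2}+\frac{C\sqrt{n}}{4K}\;=\;\frac{3\mu}{4}.
\]

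With this bound in hand, the remainder is routine. Applying Markov's inequality to the nonnegative random variable $|Z|$ gives
\[
\Pr\!\left[|Z|\ge \frac{99\mu}{100}\right]\;\le\;\frac{\bE[|Z|]}{99\mu/100}\;\le\;\frac{3\mu/4}{99\mu/100}\;=\;\frac{75}{99},
\]
so Step 1 rejects with probability at least $1-75/99=24/99=\Omega(1)$. (One could alternatively apply a Chernoff bound to $|Z|$ to obtain an exponentially small upper tail, but for the $\Omega(1)$ statement in the lemma, Markov suffices.)

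I do not anticipate a real obstacle here: the only content of the lemma is the linearity-of-expectation calculation that splits according to $Z'$ and uses the two trivial bounds $p_0^i\le 1$ and $p_0^i<1/(4K)$. The one thing to be careful about is that independence of the $X_i$'s is used only implicitly through Markov (which does not need it), but it is nonetheless worth remarking on, since it is independence that justifies speaking of a well-defined distribution of $|Z|$ obtained from the product state $\ket{\Psi_1}\otimes\cdots\otimes\ket{\Psi_{C\sqrt{n}}}$ and the local Fourier measurements on each color register.
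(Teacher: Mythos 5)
Your proof is correct, and it takes a genuinely lighter route than the paper's at the concentration step. Both arguments hinge on the same structural observation --- split the indices according to membership in $Z'$, bound the contribution of $i \in Z'$ trivially by $|Z'| \le \mu/2$, and use $p_0^i < 1/(4K)$ for $i \notin Z'$ --- but where you bound $\bE[|Z|] < 3\mu/4$ by linearity of expectation and finish with Markov, the paper writes $Z = Z_1 \cup Z_2$ with $Z_1 = Z \cap Z'$, $Z_2 = Z \setminus Z'$, stochastically dominates $|Z_2|$ by a binomial variable $W$ with mean $\mu/4$, and applies a Chernoff bound. The paper's version buys a stronger conclusion: Step 1 accepts with probability $o(1)$ (rejection $1-o(1)$), whereas Markov gives you only rejection probability at least $24/99$. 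Your version buys simplicity and robustness: it needs no independence of the measurement outcomes at all (a point you correctly flag), and the constant $\Omega(1)$ it yields is exactly what the lemma claims and all that the overall soundness argument uses, since the other two cases of the soundness analysis also only produce constant rejection probabilities. Your parenthetical remark is also accurate: since the $X_i$ are independent Bernoullis (the proofs are unentangled and the color-register measurements are local), a Chernoff bound applied directly to $|Z|$ with mean below $3\mu/4$ against the threshold $99\mu/100$ recovers the paper's exponentially small acceptance probability if one wants it.
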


\begin{proof}
Let $Z_1= Z\cap Z'$ and $Z_2= Z\setminus Z'$.  We have $|Z_1|\leq |Z'|\leq \mu /2$, and $\Pr[i\in Z_2] < 1/(4K)$ independently for every $i\in [C\sqrt{n}]$ . Let $W$ be a random subset of $[C\sqrt{n}]$ such that $\Pr[i\in W] = 1/(4K)$ independently for every $i$.  Then $|W|$ stochastically dominates $|Z_2|$ and we have $\bE[|W|]= C\sqrt{n}/(4K) = \mu/4$.
The probability that Step 1 of the Uniformity Test accepts is
\begin{align*}
 \Pr\left[|Z|\geq \frac{99\mu}{100}\right]
&= \Pr\left[|Z_1|+|Z_2|\geq \frac{\mu}{2}+\frac{\mu}{4}+\frac{24\mu}{100}\right] \\
&\leq  \Pr\left[|Z_2|\geq \frac{\mu}{4}+\frac{24\mu}{100}\right] \\
&\leq  \Pr\left[|W|\geq \frac{\mu}{4}+\frac{24\mu}{100}\right] \\
&\leq  \exp\left( -\frac{24^2}{25^2\cdot 2}\cdot \frac{\mu}{4}\right) = o(1),
\end{align*}
where we used a Chernoff bound.
\end{proof}

Let $\eps <\eta/20$ 
be a constant (recall that $\eta$ is the soundness constant in Dinur's PCP reduction), and for each $i\in [C\sqrt{n}]$, let 
\[R_i \triangleq \{v: v\in V, |\alpha^i_v|^2<1/(8Kn)\}.\]
The next lemma considers the case where one of the sets $R_i$ is noticeably large.

\begin{lemma}\label{lem:fourier2}
Suppose there exists $i\in Z'$ such that $|R_i|\geq \eps n$.  Then the Uniformity Test rejects with probability $\Omega(1)$.
\end{lemma}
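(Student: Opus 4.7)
The plan is to isolate a single ``witness'' index $i \in Z'$ with $|R_i| \geq \eps n$ and to show that the measurements on $\ket{\Psi_i}$ alone cause the Uniformity Test to reject with constant probability, regardless of the behavior of the other proofs. Since $i \in Z'$, the color register of $\ket{\Psi_i}$ measures $0$ in Step~1 with probability $p_0^i \geq 1/(4K)$; so the task reduces to showing that, \emph{conditioned on} $i \in Z$, the subsequent $F_n$ measurement of the vertex register fails to produce $0$ with constant probability. Multiplying by $p_0^i$ then gives the desired $\Omega(1)$ rejection probability via Step~2, and if the test happens already to reject in Step~1 that only helps.

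First I would write down the post-measurement state of the vertex register conditional on $i \in Z$. Setting $\hat\beta_{v,0}^i := K^{-1/2}\sum_j \beta_{v,j}^i$ for the $0$-th Fourier coefficient, this state is $\sum_v \gamma_v^i \ket v$ with $\gamma_v^i := \alpha_v^i \hat\beta_{v,0}^i/\sqrt{p_0^i}$ and $\sum_v|\gamma_v^i|^2 = 1$. A direct computation analogous to the one in the completeness proof shows that the subsequent $F_n$ measurement yields $0$ with probability $\tfrac{1}{n}\bigl|\sum_v \gamma_v^i\bigr|^2 \leq \tfrac{1}{n}\bigl(\sum_v |\gamma_v^i|\bigr)^2$, so everything reduces to bounding $\sum_v |\gamma_v^i|$ strictly below the trivial Cauchy--Schwarz bound $\sqrt n$.

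The core step is to combine the two hypotheses $i \in Z'$ and $v \in R_i$ into the pointwise estimate $|\gamma_v^i|^2 < 1/(2n)$ for $v \in R_i$, using $|\hat\beta_{v,0}^i|^2 \leq 1$ (unitarity of $F_K$), $|\alpha_v^i|^2 < 1/(8Kn)$ (definition of $R_i$), and $p_0^i \geq 1/(4K)$ ($i \in Z'$). Splitting the sum along $R_i$ and applying Cauchy--Schwarz to each piece, and writing $a := \sum_{v \in R_i}|\gamma_v^i|^2 \leq |R_i|/(2n)$ and $r := |R_i|/n \geq \eps$, I get
\[\sum_v |\gamma_v^i| \;\leq\; \sqrt{|R_i|\,a} \;+\; \sqrt{(n-|R_i|)(1-a)}.\]
The unconstrained maximum over $a$ is attained at $a = r$, which violates the constraint $a \leq r/2$, so the constrained maximum lies at the boundary $a = r/2$. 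Substituting and factoring out $\sqrt{n/2}$ yields the cleaner bound $\sqrt{n/2}\cdot h(r)$ with $h(r) := r + \sqrt{(1-r)(2-r)}$.

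Finally, since $h(0) = \sqrt 2$, $h'(0) = 1 - \tfrac{3\sqrt 2}{4} < 0$, and a short calculation shows that $h'$ is nowhere zero on $(0,1)$, the function $h$ is strictly decreasing on $[0,1]$; hence $h(\eps) \leq \sqrt 2 - c\eps$ for an absolute constant $c > 0$. This gives $\tfrac{1}{n}\bigl(\sum_v |\gamma_v^i|\bigr)^2 \leq h(\eps)^2/2 \leq 1 - \Omega(\eps) = 1 - \Omega(1)$, which multiplied by $p_0^i \geq 1/(4K)$ yields the claimed $\Omega(1)$ rejection probability. The main obstacle is the constrained optimization: one must notice that the natural unconstrained optimum violates the pointwise bound on $|\gamma_v^i|^2$, push to the boundary, and extract a quantitative gap in $h$ that survives as $\eps \to 0$; the remaining manipulations of conditional probabilities are routine.
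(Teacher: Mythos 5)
Your proposal is correct, and its skeleton is the same as the paper's: isolate a witness index $i\in Z'$, pass to the conditional vertex state $\ket{\gamma}=\sum_v\gamma^i_v\ket v$ after the color register measures $0$, and use $p^i_0\ge 1/(4K)$ together with the Cauchy--Schwarz bound $P^i_{0,v}\le|\alpha^i_v|^2$ to get the key pointwise estimate $|\gamma^i_v|^2\le 1/(2n)$ on $R_i$, then multiply the conditional rejection probability by $p^i_0$. Where you genuinely diverge is the finishing step. The paper lower-bounds the Euclidean distance $\|\ket\gamma-F_n^{-1}\ket 0\|_2^2\ge \eps(1-1/\sqrt2)^2$ coordinatewise and invokes unitarity of $F_n$ to conclude the $\ket0$-amplitude is bounded away from $1$. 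You instead bound the acceptance amplitude directly: $\bigl|\bra 0 F_n\ket\gamma\bigr| = \tfrac1{\sqrt n}\bigl|\sum_v\gamma^i_v\bigr|$, split the sum over $R_i$ and its complement, apply Cauchy--Schwarz to each block under the constraint that the $\ell_2$ mass on $R_i$ is at most $|R_i|/(2n)$, and solve the resulting one-variable constrained maximization to get the explicit gap $h(\eps)<\sqrt2$ (your monotonicity/concavity computation for $h$ checks out, and since $\eps$ is a fixed constant all you really need is $h(\eps)<h(0)$). Your route is slightly more computational but has a small advantage: by controlling the \emph{modulus} of the overlap directly, it sidesteps the phase issue lurking in the distance argument (a bound on $\|F_n\ket\gamma-\ket0\|_2$ by itself only controls the real part of $\bra0 F_n\ket\gamma$; the paper's conclusion is repairable from the same pointwise bound, but your version needs no repair). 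Both yield the same $\Omega(1/K)=\Omega(1)$ rejection probability, and your handling of the interaction with a possible Step~1 rejection is fine.
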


\begin{proof}
We focus on any such index $i$. After the Fourier transform on the color register in Step 1 of the Uniformity Test, $\ket{\Psi_i}$ becomes the state $\ket{\Phi_i}$ defined by
\begin{eqnarray}\label{eq2}
\ket{\Phi_i}& \triangleq & (I_n\otimes F_K)\sum_{v=0}^{n-1} \alpha^i_v \ket{v} \sum_{j=0}^{K-1}\beta^i_{v,j}\ket{j}
=\sum_{v=0}^{n-1}\alpha^i_v\ket{v}\sum_{j=0}^{K-1}\beta^i_{v,j}\frac{1}{\sqrt{K}}\sum_{k=0}^{K-1}\exp\left( \frac{2\pi \hat{i} jk}{K}\right)\ket{k}\nonumber \\
&=& \frac{1}{\sqrt{K}}\sum_{k=0}^{K-1}\left(\sum_{v=0}^{n-1}\alpha^i_v\left(\sum_{j=0}^{K-1}\beta^i_{v,j}\exp\left( \frac{2\pi\hat{i}jk}{K}\right)\right)\ket{v}\right)\ket{k}.
\end{eqnarray}
Let $\ket{ \gamma } = \sum_{v=0}^{n-1}\gamma^i_v\ket{v}$ with $\sum_{v=0}^{n-1}|\gamma^i_v|^2 = 1$ be the state left in the vertex register of $\ket{\Phi_i}$, after conditioning on the color register of $\ket{\Phi_i}$ measuring to 0, which occurs with probability $p^i_0$ by definition.
For each $v\in \{0, 1, \ldots, n-1\}$, let $P^i_{0, v}$ be the probability that the color register of $\ket{\Phi_i}$ is measured $0$ {\em and} the vertex register of $\ket{\Phi_i}$ is measured $v$. We have that
$$P^i_{0, v} = p^i_0 \cdot |\gamma^i_v|^2.$$
On the other hand, by Eq.~\ref{eq2} we have
$$P^i_{0, v} = \left|\frac{\alpha^i_v}{\sqrt{K}} \sum_{j=0}^{K-1}\beta^i_{v,j}\right|^2  =  \frac{|\alpha^i_v|^2}{K}\left|\sum_{j=0}^{K-1} \beta^i_{v,j}\right|^2  \leq  \frac{|\alpha^i_v|^2}{K} \cdot K\sum_{j=0}^{K-1}|\beta^i_{v,j}|^2  =  |\alpha^i_v|^2,$$
where we used the Cauchy-Schwarz inequality and the fact that $\sum_{j=0}^{K-1}|\beta^i_{v,j}|^2=1$.  Combining the above two equations, we find $p^i_0 \cdot |\gamma^i_v|^2 \leq |\alpha^i_v|^2$.
Because $i\in Z'$, we have $p^i_0 \geq 1/(4K)$. Thus $|\gamma^i_v|^2 \leq 4K|\alpha^i_v|^2$ for each $v$.
Accordingly, for each $v\in R_i$,
\[|\gamma^i_v|^2\leq \frac{4K}{8Kn} = \frac{1}{2n}.\]
Define
\[\ket{\psi} \triangleq F_n^{-1}\ket{0}  = \frac{1}{\sqrt{n}}\sum_{v = 0}^{n - 1}\ket{v}. \]
For each  $v \in R_i$, $| \gamma^i_v - 1/\sqrt{n} |^2 \geq  |1/\sqrt{2n}- 1/\sqrt{n}|^2  = (1 - 1/\sqrt{2})^2/n$.  Then, using unitarity of $F_n$, we have
\[      || F_n \ket{\gamma} -  \ket{0} ||_2^2  =  || \ket{\gamma} - \ket{\psi}  ||_2^2  \geq  \sum_{v \in R_i} (1 - 1/\sqrt{2})^2 /n \geq \eps n \cdot (1 - 1/\sqrt{2})^2 / n  = \Omega(1). \]
Since $||F_n \ket{\gamma}||_2 = 1$, it follows that the amplitude of $\ket{0}$ in $F_n \ket{\gamma}$ is of norm $\leq 1 - \Omega(1)$.  Thus if the color register of $\ket{\Phi^i}$ measures 0 in the Uniformity Test (as happens with probability $p^i_0 \geq 1/(4K)$ since $i \in Z'$), the vertex register measures to some $v \neq 0$ with probability $\Omega(1)$.  The Uniformity Test's rejection probability is therefore $\Omega(1/K) = \Omega(1)$ as claimed.
\end{proof}

In light of Lemmas~\ref{lem:fourier1} and~\ref{lem:fourier2}, we need only to address the case when $|Z'|>\mu/2$ and $|R_i|<\eps n$ for all $i\in Z'$.  We show that in this case, the Consistency Test rejects with probability $\Omega(1)$.

Consider an arbitrary state index $i \in Z'$.  Let $D_{i}$ denote the distribution on vertex/color pairs when $\ket{\Psi_{i}}$ is measured by the Consistency Test.  We can equivalently generate each $D_i$ as $D_i = g_{i}(U_{i})$, where each $U_{i}$ is a uniform, independent value from $[0, 1]$, and $g_{i}: [0, 1] \rightarrow V(G) \times \Sigma$ is a function such that each preimage $g_{i}^{-1}((v, c))$ is an interval of length equal to $\Pr[D_{i} = (v, c)]$.
 Then for each $v \notin R_i$, the set $g_{i}^{-1}(v, \star)$ is of measure $|\alpha^i_v|^2\geq 1/(8Kn)$. Select $J_{i, v} \subseteq g_{i}^{-1}(v, \star)$ of measure \emph{exactly} $1/(8Kn)$ for each such $v$, and let $J_i = \bigcup_{v \notin R_i} J_{i, v}$.  Observe the following: first, $J_i$ has measure greater than $(1 - \eps)/(8K)$.  Second, conditioned on $U_i \in J_i$, the posterior distribution of the vertex $v_i$ that $g_i$ outputs is now \textit{uniform} over $S_i\triangleq \{0, 1, \ldots, n-1\}\setminus R_i$.

So let us consider the Consistency Test applied to a sequence of states satisfying $|Z'|>\mu/2$ and $|R_i|<\eps n$ for all $i\in Z'$.  Letting the measurement outcomes be generated as described above, define the random set
\[Z'' \triangleq \{i: i\in Z', U_i\in J_i\} .\]
Notice that $Z''$ is itself a random variable determined by the $U_i$'s. Notice also that for each $i\in Z'$, the probability that $i\in Z''$ is at least $(1-\epsilon)/(8K)$, and these events are independent from each other.
Therefore we have that
\[\bE[|Z''|] \geq \frac{(1-\epsilon)|Z'|}{8K} > \frac{(1 - \eps)\mu}{16K} = \frac{(1 - \eps)C\sqrt{n}}{16K^2}.\]
Since $|Z''|$ never exceeds $C\sqrt{n}$, the total number of proof states, we find that with probability $\Omega(1)$,
\begin{equation}\label{eq3}
|Z''| \geq  \frac{C\sqrt{n}}{32K^2} .
\end{equation}
The following lemma tells us that if $C$ is chosen as a suitably large constant, then conditioned on Eq.~\ref{eq3} holding, the Consistency Test rejects with $\Omega(1)$ probability.

\begin{lemma}\label{lem:fourier3}  Let $(G, \{R_{e}\})$ be an $n$-vertex, $d$-regular constraint graph (possibly with self-loops, and $d > 1$) with alphabet $K$, such that $G$ is $(1 - \eta)$-unsatisfiable.  Let $D_1, \ldots D_{m'}$ be independent distributions on $V(G) \times \Sigma$, with $(v_i, c_i)$ denoting the output of $D_i$.  Suppose for each $i \leq m'$ there exists an $S_i \subseteq V(G)$ of size at least $(1 - \eps)n$, such that $v_i$ is uniformly distributed over $S_i$, where $\eps < \eta/20$.

Then we can set $m' = O(\sqrt{n})$ large enough so that  with probability at least $.99$ there exists $i < j \leq m'$ such that: \textit{either} $e = (v_i, v_j)$ is an edge of $G$ and $R_e(c_i, c_j) = 0$; \textit{or} $v_i = v_j, c_i \neq c_j$.
(The constant in the $O()$ notation depends on $d$ and $\eta$, but not $K$.)
\end{lemma}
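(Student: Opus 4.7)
The plan is to exploit the $(1-\eta)$-unsatisfiability of $G$ by associating a ``canonical'' coloring $\tau^\ast$ to the distributions $\{D_i\}$, and then showing that the Consistency Test rejects with constant probability via either (a) a same-vertex color inconsistency ($v_i = v_j$ but $c_i \neq c_j$) or (b) a sample pair landing on a $\tau^\ast$-violated edge with both colors matching $\tau^\ast$. Both phenomena are triggered by birthday-paradox coincidences among the $m' = C'\sqrt{n}$ samples. For each $v \in V(G)$, let $\pi_{i,v}$ denote the conditional distribution of $c_i$ given $v_i = v$ under $D_i$ (defined when $v \in S_i$), let $\bar\pi_v$ be the average of $\pi_{i,v}$ over such $i$, and set $\tau^\ast(v) = \arg\max_c \bar\pi_v(c)$. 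Put $q_{i,v} = \pi_{i,v}(\tau^\ast(v))$, and let $\bar q$ be the average of $q_{i,v_i}$ over uniformly random $i \in [m']$ and $v_i \sim \mathrm{Unif}(S_i)$. The argument splits based on whether $\bar q$ is close to $1$.

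In the concentrated case (say $\bar q \geq 1 - \eta/100$), a typical sample conforms to $\tau^\ast$. Since $\tau^\ast$ is a fixed coloring and $G$ is $(1-\eta)$-unsatisfiable, at least $\eta |E|$ edges are violated by $\tau^\ast$; the hypothesis $\eps < \eta/20$ together with a Markov bound on the fraction of vertices at which $q_{i,v}$ falls below $1/2$ then ensures that a constant fraction of these violated edges sit ``deeply'' inside the supports $S_i \cap S_j$ on which the samples concentrate. The expected number of ordered pairs $(i,j)$ whose $(v_i, v_j)$ is such a violated edge, with both colors equal to $\tau^\ast$, is then $\Theta(C'^2 \eta d)$; a second-moment / Chebyshev bound on this pair count yields a single rejecting pair with probability $\geq 0.99$ once $C'$ is sufficiently large. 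In the spread case $\bar q < 1 - \eta/100$, the conditional color distributions have nontrivial support beyond $\tau^\ast$ on average. For a vertex-collision pair $(i,j)$ with $v_i = v_j = v$, the colors $c_i, c_j$ are independent draws from $\pi_{i,v}, \pi_{j,v}$, so the disagreement probability is at least $q_{i,v}(1-q_{j,v}) + q_{j,v}(1-q_{i,v})$, which averages to $\Omega(\eta)$. The expected number of vertex collisions is $\Theta(C'^2)$ by the birthday paradox, and a second-moment argument again yields at least one color-disagreeing collision with probability $\geq 0.99$ for $C'$ large.

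The main obstacle will be the variance bookkeeping: indicators for distinct sample pairs share sample indices and are therefore positively correlated. I expect a direct second-moment computation to show the overlapping-index contributions scale like $O(m'^3 d^2 / n^2)$, dominated by the squared mean $\Theta(m'^4 \eta^2 d^2 / n^2)$ as $n \to \infty$ with fixed $C'$, so Chebyshev's inequality closes each case. A secondary subtlety is arranging the threshold in the case dichotomy to cleanly absorb the $\eps$-slack coming from $|S_i| \geq (1-\eps)n$ as well as the $\Omega(1)$ loss from conditioning on colors matching $\tau^\ast$; the hypothesis $\eps < \eta/20$ provides the needed room, and the constant-in-$K$ nature of the bound follows because $K$ enters only through the alphabet structure, not through the edge-density calculation.
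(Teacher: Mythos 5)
Your overall plan (a canonical coloring, a concentrated/spread dichotomy, and birthday-plus-second-moment estimates in each case) can be made to work, but as written it has a genuine gap in the spread case. The quantity $q_{i,v}(1-q_{j,v})+q_{j,v}(1-q_{i,v})$ does \emph{not} average to $\Omega(\eta)$: at a fixed vertex $v$, averaging over pairs $i\neq j$ gives roughly $2Q_v(1-Q_v)$ where $Q_v=\max_c\bar\pi_v(c)$, and $Q_v$ can be as small as $1/K$ (take every prover's conditional color distribution at every vertex to be uniform over $\Sigma$ --- a perfectly legal instance that falls in your spread case). Then your bound only yields an average disagreement of order $\eta/K$, so the number of provers you need grows with $K$, contradicting the lemma's explicit claim that the constant depends on $d$ and $\eta$ but not $K$ (it would still suffice for the paper's application, where $K=O(1)$, but not for the lemma as stated). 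The repair is to bound the agreement probability directly: $\Pr[c_i=c_j\mid v_i=v_j=v]=\sum_c\pi_{i,v}(c)\pi_{j,v}(c)$, whose average over pairs is at most (up to an $O(1/m')$ term) $\sum_c\bar\pi_v(c)^2\le\max_c\bar\pi_v(c)$, giving average disagreement at least $1-\bar q$, which is $\Omega(\eta)$ with no $K$ dependence. A second, smaller problem is your threshold $\eta/100$: the collision vertex of a pair $(i,j)$ is distributed over $S_i\cap S_j$, and moving the average of $1-q$ from $S_i$ to $S_i\cap S_j$ can cost up to about $\eps$, which may be as large as $\eta/20=5\eta/100$, so with your constants the spread case can be vacuous. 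The threshold must be re-tuned to sit strictly above $\eps$ yet low enough that, in the concentrated case, the low-$q$ vertices together with $\overline{S_i}\cup\overline{S_j}$ kill only a proper fraction of the at least $\eta|E|$ edges violated by $\tau^\ast$; with $\eps<\eta/20$ such a choice exists, but it needs to be exhibited and checked, not just asserted.

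For contrast, the paper's proof avoids both issues by never introducing a global canonical coloring or a case split over instances. It regenerates each $D_i$ by first drawing a random coloring $\tau_i$ with $\Pr[\tau_i(v)=c]=\Pr[c_i=c\mid v_i=v]$ and then an independent uniform $v_i\in S_i$ with $c_i=\tau_i(v_i)$, and argues \emph{per pair}: conditioned on any $\tau_i,\tau_j$, either their agreement set $P_{i,j}$ is small, in which case a same-vertex collision with differing colors occurs with probability at least $\eps/n$, or it is large, in which case the common coloring violates an $\Omega(\eta)$ fraction of the edges of the induced subgraph on $S_i\cap S_j\cap P_{i,j}$ and hitting such an edge has probability at least $\eps/n$. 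Thus every pair satisfies $\bE[V_{i,j}]\ge\eps/n$ unconditionally, the $K$-independence is automatic (only ``same color or not'' is ever used), and a single second-moment computation finishes. Your approach, once repaired as above, is a valid alternative, but the per-pair conditioning is what keeps the paper's argument short and its constants clean.
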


To apply Lemma~\ref{lem:fourier3} to our Consistency Test when $|Z'|>\mu/2$ and $|R_i|<\eps n$ for all $i\in Z'$, choose $C = O(1)$ such that $C\sqrt{n}/(32K^2) \geq m'$. Then conditioned on $|Z''|\geq m'$, which occurs with probability $\Omega(1)$, we can select $D_1,\dots, D_{m'}$ from the distributions of $\ket{\Psi_i}$'s (when measured in the Consistent Test) such that $i\in Z''$. By definition of $Z''$, these $D_1,\dots, D_{m'}$ satisfy the hypothesis of Lemma~\ref{lem:fourier3}, so the Consistency Test rejects with probability .99.   Thus in this case $\cV$ also rejects with probability $\Omega(1)$.  This completes the proof that $\cV$ possesses soundness $1-\Omega(1)$, proving Theorem~\ref{mainthm}.

\begin{proof}[Proof of Lemma~\ref{lem:fourier3}] For $i < j \leq m'$, let $V_{i,j}$ be the indicator random variable for the event that \textit{either} $e = (v_i, v_j) \in E(G)$ and $R_e(c_i, c_j) = 0$, \textit{or} $v_i = v_j$ and $c_i \neq c_j$.  Let $V = \sum_{i < j} V_{i, j}$.  To prove Lemma \ref{lem:fourier3} it is enough to show that Pr$[V = 0] \leq .01$.  We show this using the second moment method.

A first observation is that we can generate $D_i$ in the following way: first, randomly select a coloring $\tau_i$ according to some distribution $H_i$; next select $v_i$ uniformly from $S_i$, and set $c_i = \tau_i(v_i)$.  To be explicit, each $H_i$ independently chooses colors according to the rule Pr$[\tau_i(v) = c] = $ Pr$[c_i = c| v_i = v]$.  It is easily verified that this process yields $D_i$.

Next we lower-bound $\bE[V] = \sum_{i < j}\bE[V_{i, j}]$.  Fix any pair $i, j$, $1 \leq i < j \leq m'$.  Condition on any values of the colorings $\tau_i, \tau_j$; we'll show that $\bE[V_{i,j}|\tau_i, \tau_j] \geq  \eps/ n$.   Let $P_{i, j} \subseteq V(G)$ be the subset of vertices $v$ for which $\tau_{i}(v) = \tau_j(v)$.  Suppose first that $|P_{i, j}| \leq (1 - 3\eps) n$.  In this case there are at least $\eps n$ vertices contained in $S_i \cap S_j \cap \overline{P_{i, j}}$, and $V_{i, j} = 1$ whenever a vertex in this set is selected as both $v_i$ and $v_j$.  Thus in this case $\bE[V_{i,j}|\tau_i, \tau_j] \geq \eps n\cdot |S_i|^{-1} \cdot |S_j|^{-1} \geq  \eps/n$.

For our second case, suppose $|P_{i, j}| > (1 - 3\eps) n$.  Consider the induced subgraph $G[S_i \cap S_j \cap P_{i, j}]$, which contains at least $n - 2\eps n - 3\eps n = (1 - 5\eps)n$ vertices.  Since $G$ has maximum degree $d$, $|E(G)| = dn/2$, and the set $\overline{S_i} \cup \overline{S_j} \cup \overline{P_{i, j}}$ is incident on at most $d(5\eps n)$ edges, we have that $|E(G[S_i \cap S_j  \cap P_{i, j}])| \geq dn/2 - 5d \eps n = (1 - 10\eps)(dn)/2 = (1 - 10\eps)|E(G)|$.  By $(1 - \eta)$-unsatisfiability of $(G, \{R_e\})$, the coloring $\tau_i$ satisfies at most a $(1 - \eta)/(1 - 10\eps)$ fraction of the edge constraints in $G[S_i \cap S_j \cap P_{i, j}]$.  Thus the fraction of these constraints which are \textit{violated} by $\tau_i$ is at least $1 - \frac{1 - \eta}{1 - 10\eps} = \frac{1 - 10\eps -( 1- \eta )}{1 - 10\eps} > \frac{\eta}{2(1 - 10\eps)} > \frac{\eta}{2}$, since $\eta > 20\eps$.

Now $\tau_j \equiv \tau_i$ on $P_{i, j}$.  We can thus lower-bound $\bE[V_{i, j}|\tau_i,\tau_j]$ by the probability that $v_i, v_j \in S_i \cap S_j \cap P_{i, j}$ and that $v_i, v_j$ form (in either order) an edge violated by the color assignment $(\tau_i(v_i), \tau_j(v_j))$.  Note that some edges are self-loops and so may only be chosen in one way.  We get
\[  \bE[V_{i, j}|\tau_i, \tau_j]   \geq (1 -  5\eps)^2\cdot  \frac{ \frac{\eta}{2}|E(G[S_i \cap S_j \cap P_{i, j}])|}{n^2} \]
\[  \geq  (1 -  5\eps)^2\cdot  \frac{ \eta (1 - 10\eps)|E(G)|    }{2n^2}    =  \frac{\eta(1 -  5\eps)^2(1 - 10\eps)d}{4n} .  \]
Recall that $\eps < \eta/20 < 1/20$, so the quantity above is greater than $\eta\cdot 2^{-3}d/4n > \eta/(20n) > \eps/n$, as needed (using $d > 1$).  Thus in either of our two cases we conclude $\bE[V_{i, j}|\tau_i, \tau_j] \geq \eps/n$, so $\bE[V_{i, j}] \geq \eps/n$ unconditioned as well.  Summing over all $i < j$, we find $\bE[V] \geq \eps {m' \choose 2}/n = \Omega\left((m')^2/n \right)$.

\vspace{.5 em}

Next we upper-bound $\bE[V^2] = \sum_{i < j, k < l} \bE[V_{i, j}V_{k, l}]$.  There are ${m' \choose 2}$ terms for which $(i, j) = (k, l)$.  For each such term $\bE[V_{i,j}^2] = \bE[V_{i, j}]$.  Condition on the vertex $v_i$ outputted by $D_i$.  Fixing any such choice of $v_i$, the probability that $V_{i, j} = 1$ is of course upper-bounded by the probability that $v_j$ is equal or adjacent to $v_i$ in $G$.  Since $v_j$ is uniform on $S_j$ and $v_i$ is of degree $d$, this probability is at most $(d+1)/|S_i| \leq (d + 1)/((1 - \eps)n)$, so $\bE[ V_{i, j} | v_i] \leq (d+1)/((1 - \eps)n)$.  As $v_i$ was an arbitrary conditioning, we conclude $\bE[ V_{i, j}] \leq (d + 1)/((1 - \eps)n)$.  Thus the contribution to $\bE[V^2]$ from terms where $(i, j) = (k, l)$ is at most ${m' \choose 2}(d + 1)/((1 - \eps)n) = O\left((m')^2/n\right)$.

If $(i, j), (k, l)$ consists of three distinct indices, assume that $j = l$, the other cases being handled similarly.  Condition on any choice of $v_j$.  Then $V_{i,j}V_{j, k} = 1$ can only occur if $v_j$ and $v_k$ are each either incident on or equal to $v_i$.  These two events are independent after conditioning on $v_j$ since $D_i, D_j, D_k$ are independent.  Thus $\bE[V_{i, j}V_{k, j}] \leq \left[(d + 1)/((1 - \eps)n)\right]^2$.

For any three distinct indices $a < b < c \leq m'$, there are six tuples $(i < j), (k < l)$ for which $\{i, j, k, l\} = \{a, b, c\}$. Thus the contribution to $\bE[V^2]$ from these ``triplet'' terms is at most $6{m' \choose 3} \cdot \left[(d + 1)/((1 - \eps)n)\right]^2 = O\left( (m')^3/n^2 \right)$.

If $(i, j), (k, l)$ are four distinct elements of $[m']$, then the pair $V_{i, j}, V_{j, k}$ depend on disjoint sets of independent random variables, so that $\bE[V_{i, j}V_{k, l}] = \bE[V_{i, j}]\bE[V_{k, l}]$.  Thus the contribution to $\bE[V^2]$ from these terms is upper-bounded by $\sum_{i < j, k < l}  \bE[V_{i, j}]\bE[V_{k, l}] = \bE[V]^2$.

Putting things together,
\[\bE[V^2] <  O\left(  \frac{(m')^2}{n}  + \frac{(m')^3}{n^2} \right)  +  \bE[V]^2     . \]
With this bound in hand, we apply Chebyshev's inequality:
\begin{align*}
\Pr[V = 0] &\leq \Pr\left[|V - \bE[V]| \geq \bE[V] \right]\\   &\leq \frac{\bE[V^2] - \bE[V]^2}{\bE[V]^2} \\
&\leq    O \left( \frac{n^2}{(m')^4} \left(  \frac{(m')^2}{n}  + \frac{(m')^3}{n^2} \right)  \right)        \\
&=   O\left( \frac{n}{(m')^2}  + \frac{1}{m'}      \right) ,
\end{align*}
which is at most $.01$ if we take $m'$ to be a suitably large value in $O(\sqrt{n})$.  This proves the lemma.
\end{proof}

\section{Acknowledgements}
We thank Scott Aaronson for helpful comments and suggestions.

\bibliographystyle{halpha}

\end{document}